\newtheorem{theorem}{Theorem}
\newtheorem{corollary}{Corollary}
\newtheorem{assumption}{Assumption}
\newtheorem{problem}{Problem}
\newtheorem{definition}{Definition}
\crefname{equation}{Eq.}{Eqs.}
\crefname{pluralequation}{Eqs.}{Eqs.}
\crefname{algorithm}{Algorithm}{Algorithm}
\crefname{figure}{Fig.}{Figs.}
\crefname{pluralfigure}{Figs.}{Figs.}
\crefname{section}{Sect.}{Sects.}
\crefname{pluralsection}{Sects.}{Sects.}
\crefname{table}{Table}{Table}
\crefname{pluraltable}{Tables}{Tables}
\crefname{definition}{Def.}{Def.}
\crefname{pluraldefinition}{Defs.}{Defs.}
\crefname{theorem}{Theorem}{Theorems}
\crefname{pluraltheorem}{Theorems}{Theorems}
\crefname{lemma}{Lemma}{Lemmas}
\crefname{plurallemma}{Lemmas}{Lemmas}
\crefname{example}{Example}{Example}
\crefname{pluralexample}{Examples}{Examples}
\crefname{problem}{Problem}{Problem}
\crefname{pluralproblem}{Problems}{Problems}
\crefname{assumption}{Assumption}{Assumption}
\crefname{pluralassumption}{Assumptions}{Assumptions}
\crefname{remark}{Remark}{Remark}
\crefname{pluralremark}{Remarks}{Remarks}
\crefname{proposition}{Proposition}{Proposition}
\crefname{pluralproposition}{Propositions}{Propositions}
\crefname{corollary}{Corollary}{Corollary}
\crefname{pluralcorollary}{Corollaries}{Corollaries}
\crefname{appendix}{Appendix}{Appendices}
\crefname{pluralappendix}{Appendices}{Appendices}
\newcommand*{\NN}{\mathbb{N}}
\newcommand*{\RR}{\mathbb{R}}
\newcommand{\system}{\mathcal{S}}
\newcommand*{\policy}{\ensuremath{\pi}}
\newcommand{\satprob}{\mathrm{Pr}}
\newcommand{\Prob}{\mathbb{P}}
\newcommand*{\distr}[1]{\Delta(#1)}
\newcommand{\tuple}[1]{\ensuremath{( #1 )}}
\newcommand*{\States}{\ensuremath{S}}
\newcommand*{\Actions}{\ensuremath{Act}}
\newcommand*{\initState}{\ensuremath{s_I}}
\newcommand*{\transprob}{\ensuremath{\psi}}
\newcommand*{\transfuncImdp}{\ensuremath{\mathcal{P}}}
\newcommand*{\labelfunc}{\ensuremath{L}}
\newcommand*{\Labels}{\ensuremath{\mathcal{Y}}}
\newcommand*{\IMDP}{\ensuremath{\tuple{\States,\initState,\Actions,\transfuncImdp,\Labels,\labelfunc}}}
\newcommand*{\imdp}{{\ensuremath{\mathcal{M}_\Interval}}}
\newcommand*{\scheduler}{\ensuremath{\sigma}}
\newcommand*{\Scheduler}{\ensuremath{\mathfrak{S}}}
\newcommand*{\schedulerSpace}{\ensuremath{\Scheduler}}
\newcommand*{\Interval}{\ensuremath{\mathbb{I}}}
\newcommand*{\plow}{\ensuremath{\check{p}}}
\newcommand*{\pupp}{\ensuremath{\hat{p}}}
\newcommand*{\horizon}{\ensuremath{h}}
\DeclareMathOperator*{\argmax}{arg\,max}
\newcommand{\noise}{{\eta}}
\newcommand{\Bin}{B}
\title{\LARGE \bf
Data-Driven Abstraction and Synthesis for \\ Stochastic Systems with Unknown Dynamics
}
\author{Mahdi Nazeri, Thom Badings, Anne-Kathrin Schmuck, Sadegh Soudjani, and Alessandro Abate%
\thanks{M. Nazeri, T. Badings, and A. Abate are with the Department of Computer Science, University of Oxford, Oxford, United Kingdom. {\tt\small \{thom.badings,alessandro.abate\}} {\tt\small @cs.ox.ac.uk}}%
\thanks{M. Nazeri, A. Schmuck, S. Soudjani are with the Max Planck Institute for Software Systems, Kaiserslautern, Germany. {\tt\small \{mnazeri,akschmuck,sadegh\}@mpi-sws.org}}%
\thanks{M. Nazeri and A. Schmuck are supported by the DFG  grants SCHM 3541/1-1 and 389792660 TRR 248 (CPEC). T. Badings and A. Abate are supported by EPSRC grant EP/Y028872/1. %
The research of S. Soudjani is supported by EIC 101070802 and ERC
101089047.
}
}
\begin{document}

\maketitle
\thispagestyle{empty}
\pagestyle{empty}

\begin{abstract}%
We study the automated abstraction-based synthesis of correct-by-construction control policies for stochastic dynamical systems with unknown dynamics. Our approach is to learn an abstraction from sampled data, which is represented in the form of a finite Markov decision process (MDP). In this paper, we present a data-driven technique for constructing finite-state interval MDP (IMDP) abstractions of stochastic systems with unknown nonlinear dynamics. As a distinguishing and novel feature, our technique only requires (1) noisy state-input-state observations and (2) an upper bound on the system's Lipschitz constant. Combined with standard model-checking techniques, our IMDP abstractions enable the synthesis of policies that satisfy probabilistic temporal properties (such as ``reach-while-avoid'') with a predefined confidence. Our experimental results show the effectiveness and robustness of our approach.
\end{abstract}%

\section{Introduction}
\label{sec:introduction}

Ensuring safe control of safety-critical systems, such as power systems, autonomous vehicles, and air traffic control, is of paramount importance. 
Often, these systems can be effectively modelled as stochastic dynamical systems. 
The problem of designing a control policy that provably satisfies a given specification is referred to as \emph{formal policy synthesis}~\cite{belta2017formal}. A simple yet ubiquitous example of such a specification is the \emph{reach-while-avoid task}~\cite{DBLP:conf/cav/FanMM018,DBLP:journals/automatica/SummersL10}. A reach-while-avoid task requires the policy to drive the system to a set of goal states within a fixed time horizon, while always avoiding unsafe states.
For stochastic systems, the goal is to synthesise a policy that maximises the probability that the closed-loop system satisfies the reach-while-avoid specification.

A classical approach to formal policy synthesis is to construct a finite-state abstraction of the stochastic dynamical system and synthesise an optimal policy over the abstract model~\cite{LSAZ21,DBLP:journals/automatica/AbatePLS08,DBLP:books/daglib/0032856}.
However, these abstraction techniques are largely \emph{model-based} and thus require the system's dynamics to be known.
In many scenarios, the dynamics are uncertain or unknown at all, rendering model-based abstraction techniques inapplicable.
Recently, \emph{data-driven abstractions} have been proposed to construct sound abstractions from observations~\cite{DBLP:conf/adhs/MakdesiGF21,DBLP:journals/csysl/CoppolaPM23,lavaei2023compositional,DBLP:journals/corr/abs-2206-08069,DBLP:journals/automatica/HashimotoSKUD22,DBLP:conf/cdc/DevonportSA21,DBLP:journals/corr/abs-2303-17618,DBLP:journals/csysl/PeruffoM23,schon2024data,DBLP:conf/l4dc/GraciaBLL24,DBLP:conf/hybrid/JacksonLFL21,DBLP:journals/jair/BadingsRAPPSJ23,DBLP:journals/csysl/LavaeiSFZ23, ajeleye2023data}. These methods rely on inferring information from data to compensate for the lack of complete knowledge about the dynamics. 
However, aside from a few recent exceptions~\cite{l4dc,DBLP:conf/l4dc/GraciaBLL24,DBLP:journals/corr/abs-2412-11343,DBLP:journals/csysl/LavaeiSFZ23}, these approaches are limited to non-probabilistic and/or linear systems. Other approaches from related areas, such as reinforcement learning, can generalise to different settings but lack rigorous guarantees.

In this work, we introduce a novel data-driven abstraction algorithm for discrete-time, nonlinear stochastic systems. 
We represent abstractions as interval Markov decision processes (IMDPs) and assume no knowledge of either the system dynamics or the probability distribution of the disturbances. 
Our method requires only (1) noisy state-input-state observations and (2) a bound on the Lipschitz constant of~the~dynamics.

Contrary to assumption (1), existing approaches make the (often unrealistic) assumption that the stochastic and non-stochastic part of the dynamics can be sampled individually~\cite{DBLP:journals/corr/abs-2412-11343,l4dc}. The noisy state-input-state data used by our method can be easily obtained by directly sampling the (noisy) system behaviour, either in open- or closed-loop operation. 

Regarding requirement (2), Lipschitz continuity is a standard assumption in control theory and plays a central role in our approach. 
Lipschitz constants can be computed directly when the dynamics are known and otherwise estimated through various approaches depending on the available information about the dynamics.
When a prior distribution over the dynamics is known, statistical techniques~\cite{casella2002statistical} can be used to compute confidence intervals on the Lipschitz constant based on independent parameter samples. 
If only observations are available, data-driven methods enable robust estimation of Lipschitz constants from noisy or noise-free samples under additional smoothness assumptions~\cite{zhang2024formal,DBLP:journals/tmlr/HuangRC23}.

Closely related works are~\cite{l4dc, DBLP:journals/corr/abs-2412-11343}; however, contrary to our setting, these approaches require noise-free samples of the system's dynamics.  \cite{l4dc}~constructs an IMDP abstraction given noise-free observations and a Lipschitz constant of the system, and requires the noise to be additive.  The work\cite{DBLP:journals/corr/abs-2412-11343} assumes only the disturbance is unknown and builds abstractions as (more general) robust MDPs. Finally, the paper \cite{DBLP:journals/csysl/LavaeiSFZ23}~constructs MDP/IMDP abstractions from noisy trajectories but can only provide guarantees for finite-horizon specifications. Similar guarantees can alternatively be provided via certificate synthesis, as pursued for a model-free setup in \cite{rickard2025data}.

In summary, we present a novel data-driven algorithm for constructing IMDP abstractions of discrete-time stochastic systems with unknown dynamics. Our abstractions are probably approximately correct (PAC), enabling the synthesis of policies with guarantees. To the best of our knowledge, our approach is the first data-driven abstraction technique for fully unknown stochastic systems that can be used for both finite and infinite horizon specifications. 

\section{Problem Formulation}
A probability space $\tuple{\Omega, \mathcal{F}, \mathbb{P}}$ consists of a sample space $\Omega$, a $\sigma$-algebra $\mathcal{F}$, and a probability measure $\Prob \colon \mathcal{F} \to [0,1]$.
A random variable $\noise$ is a measurable function $\noise \colon \Omega \to \RR^n$, $n \in \NN$, which takes value $\noise(\omega) \in \RR^n$ for $\omega \in \Omega$.
We write $\distr{X}$ for the set of distributions over an (in)finite set $X$.

\label{sec:problem}
\textbf{Stochastic systems.} A stochastic dynamical system $\system$ is defined over discrete steps $k \in \NN$ by the difference equations
\begin{equation}
    x_{k+1} = f(x_k, u_k, \noise_k), \quad 
    y_{k} = g(x_{k}),
    \label{eq:system}
\end{equation}
where $x_k \in \mathcal{X} \subset \mathbb{R}^n$ is the state at step $k$, with $x_0 = x_I$ the initial state and $\mathcal{X}$ a compact state space.
The labelling function $g \colon \mathcal{X} \to 2^{\mathcal{Y}}$ maps each state to a (possibly empty) subset of the finite set of labels $\mathcal{Y}$.
The control input at step $k$ is $u_k \in \mathcal{U} \subseteq \mathbb{R}^m$, and $\{\noise_k\}_{k \in \NN}$ is a sequence of i.i.d. random variables $\noise_k \colon \Omega \to \mathcal{W} \subseteq \RR^p$ defined on the probability space $(\Omega, \mathcal{F}, \mathbb{P})$.
We make the (standard) assumption that $f \colon \mathcal{X} \times \mathcal{U} \times \mathcal{W} \to \mathcal{X}$ is Lipschitz continuous. The labelling function $g$ is fully known, but the function $f$ is unknown.

\begin{definition}[Lipschitz]
    \label{def:Lipschitz}
    System $\system$ is Lipschitz continuous  in $x$ and $u$ if there exist constants \(L_X, L_U \in \mathbb{R}_{>0}\) such that
    \begin{align}
    &{} |f(x_1, u_1, \noise(\omega)) - f(x_2, u_2, \noise(\omega))| \leq L_X|x_1 - x_2|
    \nonumber
    \\
    \label{eq:lipschitz}
    &{} \quad+ L_U|u_1 - u_2| \quad \forall x_1, x_2 \in \mathcal{X}, \,\, u_1, u_2 \in \mathcal{U}, \,\, w \in \Omega,
    \end{align}
    where \(|\cdot|\) is an appropriate norm on the respective spaces.
\end{definition}
    
\begin{assumption}[Unknown dynamics]
    \label{assumption}
    The dynamics function $f$ of the system $\system$ is unknown, but $f$ is assumed to be Lipschitz continuous, and constants $L_X$ and $L_U$ (or valid upper bounds of them) are available.
\end{assumption}

\textbf{Policies.}
A (Markovian) policy for system $\system$ is a function $\policy \coloneqq (\policy_0, \policy_1, \ldots)$, where each $\policy_k \colon \RR^n \to \mathcal{U}$, $k \in \NN$, is a measurable map.
We denote the set of all policies for system $\system$ by $\Pi^\system$.
For a fixed policy $\pi$, the state evolves according to $x_{k+1} = f(x_k, \pi(x_k), \noise_k)$, which induces a Markov process over (infinite) state \emph{trajectories} $\{x_0, x_1, \ldots\}$, and thus also over the \emph{traces} $\{y_0, y_1, \ldots\}$, with each $y_k \in 2^\mathcal{Y}$ a subset of labels. We denote the induced probability measure associated with this Markov process over traces by $\mathbb{P}_\pi^\system$~\cite{Bertsekas.Shreve78, DBLP:books/wi/Puterman94}.

\textbf{Specifications.}
A specification $\varphi$ for the system $\system$ is a measurable subset of the set of (infinite) labelling~trajectories,\footnote{We use $A^\mathbb{N}$ to denote the set of all infinite strings of the form $(a_0, a_1, a_2, \ldots)$, where $a_k \in A$ takes value from $A$ for every $k\in\mathbb{N}$.} i.e., $\varphi\subseteq{(2^\mathcal{Y})}^\mathbb{N}$, where $\mathbb{N}$ includes $0$.

As an example, consider a system $\mathcal{S}$ as in \eqref{eq:system} with goal states $\mathcal{X}_G \subset \mathcal{X}$, unsafe states $\mathcal{X}_U \subset \mathcal{X}$, and a labelling function $g \colon \mathcal{X} \to 2^\mathcal{Y}$ over $\mathcal{Y} = \{\mathsf{G},\mathsf{U}\}$ s.t.\ for all $x \in \mathcal{X}$,
\[
       x \in \mathcal{X}_G \iff \mathsf{G} \in g(x),
        \quad\text{and}\quad
       x \in \mathcal{X}_U \iff \mathsf{U} \in g(x).
    \]
\noindent
A system with this labelling function supports the commonly used \emph{reach-while-avoid} specification $\varphi_\text{rwa}$:
    \begin{equation*} %
        \varphi_\text{rwa} \coloneqq \{ 
       (y_0, y_1, \ldots) : {}{}\exists k \in \mathbb{N}, \, \mathsf{G} \in y_k \, \wedge
        \forall k'\leq k, \, \mathsf{U} \notin y_{k'}
        \}.
    \end{equation*}
If we also require that $k \leq \horizon$ for some $k \in \NN$, then the specification is \emph{bounded}, i.e., $\mathcal{X}_G$ must be reached within $\horizon$ steps.
Without this extra constraint, the specification has an unbounded horizon, i.e., $\mathcal{X}_G$ must only be \emph{eventually} reached.

\begin{definition}[Satisfaction probability]
    \label{def:satprob_dyn}
    The probability that the system $\system$ with policy $\pi \in \Pi^\system$ satisfies the specification $\varphi$ is $\satprob_\pi^\system(\varphi) \coloneqq \Prob_\pi^\system\{ (y_0, y_1, \ldots) \in \varphi \}$. 
\end{definition}

\textbf{Data-driven synthesis.} 
Given a specification $\varphi$, we aim to synthesise a policy such that $\satprob_\pi^\system(\varphi)$ is at least $\rho \in [0,1]$.
As the dynamics are unknown (cf. \cref{assumption}), we can only use sampled data to find this policy.
Crucially, we assume we can only observe (noisy) \emph{state-input-state} triples $(x_k, u_k, x_k^+)$, but the noise value $\noise_k(\omega)$ that produced $x_k^+$ is unknown.
\begin{definition}[Dataset]
    A dataset for system $\system$ of $N \in \NN$ state-input-state observations is defined as
    \begin{equation*}
    \mathbf{D}_N = \big\{ \big(x_i, u_i, f(x_i,u_i,\noise(\omega_i)) \big) : i \in \{1,\ldots,N\}, \omega_i \in \Omega \big\}.
    \end{equation*}
\end{definition}%
\smallskip
Given the dataset $\mathbf{D}_N$, we aim to solve the next problem.\footnote{The policy $\policy$ depends on the dataset $\mathbf{D}_N$, but we omit this dependency in favor of less cluttered notation.}

\begin{problem}[Data-driven synthesis]
\label{problem}
Given a system $\system$ as in \eqref{eq:system}, Lipschitz constants $L_X$ and $L_U$, a specification $\varphi \subseteq (2^\mathcal{Y})^\mathbb{N}$, a dataset $\mathbf{D}_N = ( x_i, u_i, x_i^+ )_{i=1}^N$, and probabilities $\rho \in [0,1]$ and $\beta \in (0,1)$, compute a policy $\pi \in \Pi^\system$ such that 
\begin{equation*}
  \Prob^N \big\{ (\omega_1, \ldots, \omega_N) \in \Omega^N : 
    \satprob_\pi^\system(\varphi)
     \geq \rho
    \big\} \geq 1 - \beta.   
\end{equation*}
\end{problem}
\smallskip

\cref{problem} requires an algorithm that, given a dataset $\mathbf{D}_N$, synthesises a policy $\pi$ for which the probability of satisfying $\varphi$ is at least $\rho$.
However, since this policy is computed based on a finite dataset, the outer probability shows the chance that we might (with probability at most $\beta)$ have used a dataset leading to an invalid solution, i.e., the synthesised policy satisfies $\varphi$ with probability lower than $\rho$.

\section{Finite-State IMDP Abstraction}
\label{sec:abstraction}
We solve \cref{problem} by learning an abstraction of the system $\system$ from the dataset $\mathbf{D}_N$.
We represent this abstraction as an MDP with intervals of transition probabilities, called an interval MDP (IMDP).
IMDPs have been used for such data-driven abstractions by~\cite{l4dc,DBLP:journals/corr/abs-2412-11343,DBLP:conf/l4dc/GraciaBLL24}, but these approaches require direct access to samples of the noise $\noise(\omega)$, which is unrealistic if the dynamics are unknown.
By contrast, our dataset $\mathbf{D}_N = ( x_i, u_i, x_i^+ )_{i=1}^N$ only contains \emph{noisy observations} and not the actual values of the stochastic noise.

\textbf{IMDPs.}
In this section, we introduce the necessary background on IMDP abstractions, largely following the notation from~\cite{l4dc}.
In \cref{sec:data}, we propose our novel data-driven approach for constructing formally correct abstractions.

\begin{definition}[IMDP]
    \label{def:IMDP}
    An \emph{interval MDP} $\imdp$ is a tuple $\IMDP$, where
    \begin{itemize}
        \item $\States$ is a finite set of states, and $\initState \in S$ is the initial state,
        \item $\Actions$ is a finite set of actions, and we write $\Actions(s) \subseteq \Actions$ for the actions enabled in state $s \in S$,
        \item $\labelfunc \colon \States \to 2^\Labels$ is a labelling function over a finite set of labels $\Labels$ (which is the same as for the system $\system$), and %
        \item $\transfuncImdp \colon \States \times \Actions \rightharpoonup 2^{\distr{\States}}$ is a transition function\footnote{The transition function $\transfuncImdp$ is a partial map, shown with $\rightharpoonup$, due to the fact that not all actions may be enabled in every state.} defined for all $s \in \States, a \in \Actions(s)$ as
    \begin{align*}
        \transfuncImdp(s,a) = \big\{ {}&{} \mu \in \distr{\States} : 
        \forall s' \in \States, \,\,
        \\ 
        &\mu(s') \in [\plow(s,a,s'), \pupp(s,a,s')] \subset [0,1]
        \big\}.
    \end{align*}
     \end{itemize}
\end{definition}

\smallskip
We call $[\plow(s,a,s'), \pupp(s,a,s')] \subseteq [0,1]$ the \emph{probability interval} for the transition $(s,a,s')$.
Actions in an IMDP are chosen by a (Markovian) \emph{scheduler}\footnote{For clarity, in this work we consistently use the word \emph{scheduler} for (finite) IMDPs, and \emph{policy} for (continuous) dynamical systems.} $\scheduler = (\scheduler_0, \scheduler_1, \ldots)$, where each $\scheduler_k \colon \States \to \Actions$ is defined such that $\scheduler_k(s) \in \Actions(s)$ for all $s \in \States$.
The set of all schedulers for $\imdp$ is $\schedulerSpace^\imdp$.

An IMDP defines a game between a scheduler that selects actions and an \emph{adversary} that fixes distributions $P(s,a) \in \transfuncImdp(s,a)$ for all $s \in \States$, $a \in \Actions(s)$.
We assume a different probability can be chosen every time the same pair $(s,a)$ is encountered (called the \emph{dynamic} uncertainty model~\cite{DBLP:journals/mor/Iyengar05}).
We overload notation and write $P \in \transfuncImdp$ for fixing an adversary.

Fixing $\scheduler \in \schedulerSpace^\imdp$ and $P \in \transfuncImdp$ for $\imdp$ yields a Markov chain with (standard) probability measure $\Prob^\imdp_{\scheduler,P}$~\cite{BaierKatoen08}.
Analogous to the dynamical system $\system$, a specification $\varphi'$ for $\imdp$ is a subset of ${(2^\Labels)}^\mathbb{N}$, i.e, a set of (infinite) traces $(\labelfunc(s_0), \labelfunc(s_1), \ldots)$.
The probability that $\imdp$ with scheduler $\scheduler$ and adversary~$P$ satisfies $\varphi'$ is written as $\satprob^\imdp_{\scheduler,P}(\varphi')$.
An optimal (robust) scheduler $\scheduler^\star \in \schedulerSpace^\imdp$ maximises this satisfaction probability:\footnote{Conversely, we may define variants of robust IMDP schedulers where the max. and/or min. operators are flipped.}
\begin{equation}
    \label{eq:optimal_policy}
    \scheduler^\star \in \argmax_{\scheduler \in \schedulerSpace^\imdp} \, \min_{P \in \transfuncImdp} \satprob^\imdp_{\scheduler,P}\{ (L(s_0), L(s_1), \ldots) \in \varphi' \}.
\end{equation}
Specifications for IMDPs are commonly expressed in \emph{linear temporal logic} (LTL) or \emph{probabilistic computation tree logic} (PCTL), for which optimal schedulers can be computed using, e.g., robust value iteration~\cite{DBLP:conf/cdc/WolffTM12, DBLP:journals/mor/Iyengar05}, implemented in mature model checking tools such as PRISM~\cite{DBLP:conf/cav/KwiatkowskaNP11} and Storm~\cite{DBLP:conf/cav/DehnertJK017}.

\textbf{States.} 
We partition the compact state space $\mathcal{X}$ into $v \in \mathbb{N}$ convex polytopic regions $\{R_1, R_2, \dots, R_v\}$, such that
\begin{itemize}
    \item for all $i, j$ with $i \neq j$, it holds that $R_i \cap R_j = \emptyset$;
    \item the partition covers $\mathcal{X}$, i.e., $\bigcup_{i=1}^{v} R_i = \mathcal{X}$.
\end{itemize}
We define an absorbing region, denoted as $R_{\star}$, which contains the rest of the state space, given by $R_{\star} = \mathbb{R}^n \setminus \mathcal{X}$.
We define one IMDP state for each partition region, such that $S \coloneqq \{s_1, s_2, \dots, s_v\} \cup \{s_\star\}$.
The \emph{abstraction map} $\mathcal{T}: \mathbb{R}^n \to S$ assigns to each continuous state an abstract IMDP state, i.e.\ $x \in R_i \iff \mathcal{T}(x) = s_i$.
We also define its preimage $\mathcal{T}^{-1}: S \to 2^{\mathbb{R}^n}$, which maps each IMDP state to its partition region.
The initial IMDP state is $s_I = \mathcal{T}(x_I)$. 

\textbf{Labelling.}
The set of IMDP labels $\Labels$ is the same as for the dynamical system $\system$. We call an IMDP \emph{label-preserving} if for all $x \in \mathcal{X}$, $\labelfunc(\mathcal{T}(x)) = g(x)$.
That is, the label $g(x)$ in state $x$ in system $\system$ equals the label $\labelfunc(\mathcal{T}(s))$ in the corresponding IMDP state.
Note that this implicitly requires all states $x \in \mathcal{T}^{-1}(s)$ in the same region to have the same label. %

\textbf{Actions.} 
We define one IMDP action $a_j$ for each IMDP state (except $s_\star$), such that $\Actions \coloneqq \{a_1, a_2 \ldots, a_v\}$. We partition the control space $\mathcal{U} \subseteq \mathbb{R}^m$ into $J \in \NN$ disjoint subsets $\bar{U}_1,\ldots,\bar{U}_J$, such that $\mathcal{U} = \bigcup_{j=1}^{J} \bar{U}_j$, with $\bar{U}_j \cap \bar{U}_{j'} = \emptyset$ for $j \neq j'.$
For each $\bar{U}_j$, we select a representative input, denoted by $\bar{u}_j \in \bar{U}_j$, collectively written as $\mathbf{\bar{u}} = \{\bar{u}_1, \bar{u}_2, \dots, \bar{u}_J\}$. 

Actions in the IMDP correspond to taking inputs $u \in \mathbf{\bar{u}}$ in system $\system$.
However, every action $a \in \Actions$ is not directly related to a single input $u \in \mathbf{\bar{u}}$ (as is common in many abstraction methods).
Instead, the input for $a \in \Actions$ also depends on the continuous state $x \in \mathcal{X}$.
This intuition is formalised by defining an \emph{interface} for every action.

\begin{definition}[Interface function]
    \label{def:interface}
    Every pair $(s,a) \in \States \times \Actions$ such that $a \in \Actions(s)$ is associated with an \emph{interface function} $I_{s,a}: \mathcal{T}^{-1}(s) \to \mathbf{\bar{u}} \subset \mathcal{U}$ that maps every continuous state $x \in \mathcal{T}^{-1}(s)$ to the control input $I_{s,a}(x) \in \mathbf{\bar{u}}$ associated with performing action $a$ in state $x$.
\end{definition}

Thus, a crucial step in our approach is to define the enabled actions $\Actions(s)$ for all $s \in \States$ and the interface functions $I_{s,a}$.
We, however, postpone this important step to \cref{sec:data}, where we will learn these elements from the dataset $\mathbf{D}_N$.

\textbf{Transitions.} 
For an IMDP state $s \in \States$ and a corresponding continuous state $x \in \mathcal{T}^{-1}(s)$, let $a \in \Actions(s)$ be an action executed in the IMDP.
Executing this action $a$ corresponds to executing the control input $u = I_{s,a}(x)$ in system $\system$ given by the interface function.
The probability that the next continuous state is contained in a fixed compact set $X' \subset \mathcal{X}$ is
\[
\transprob(x, X', a) \coloneqq \Prob \left\{ \omega \in \Omega : f(x, I_{\mathcal{T}(x),a}(x), \noise(\omega)) \in X' \right\},
\]
where the dependency on $I_{\mathcal{T}(x),a}$ is implicit.
We then define each interval of the IMDP transition function $\transfuncImdp(s,a)$ as
\begin{subequations}\label{eq:transition_function}
\begin{align}
\plow(s,a,s')&:=\min_{x \in \mathcal{T}^{-1}(s)} \transprob \big(x, \mathcal{T}^{-1}(s'), a \big),\\
\pupp(s,a,s')&:=\max_{x \in \mathcal{T}^{-1}(s)} \transprob \big(x, \mathcal{T}^{-1}(s'), a \big).
\end{align}
\end{subequations}

In \cref{sec:data}, we present our data-driven approach to learn $\plow$ and $\pupp$ for all state-action pairs based on the dataset $\mathbf{D}_N$.

\textbf{Correctness.} 
We show that any scheduler $\scheduler$ for the abstract IMDP can be \emph{refined} into a policy $\policy$ for the concrete system $\system$.
Importantly, if the IMDP is label-preserving, the probability that the IMDP satisfies a specification $\varphi$ (under $\scheduler)$ is a \emph{lower bound} on the satisfaction probability in the concrete system (under $\policy$).
A similar result for reach-avoid specifications has been proven in~\cite[Thm.~5.23]{ThesisThom2025}, which we generalise to the broader set of  specifications we consider in this paper.

\begin{theorem}[Policy synthesis]    \label{thm:existence_controller_main}
    Let $\imdp$ be the IMDP abstraction for system $\system$, let $\varphi \subseteq (2^\mathcal{Y})^\mathbb{N}$ be a specification, and let $\scheduler \in \schedulerSpace^\imdp$ be an IMDP scheduler.
    It holds that
    \begin{equation}
        \label{eq:existence_controller_main}
        \satprob_\pi^\system(\varphi) \, \geq \, \min_{P \in \transfuncImdp} \satprob^\imdp_{\scheduler,P}\{ (L(s_0), L(s_1), \ldots) \in \varphi' \},
    \end{equation}
    where the policy $\policy$ is constrained to the interface functions, i.e., for all $x_k \in \mathcal{X}$ and $k \in \NN$,
    \[
    \policy_k(x_k) \in I_{s,a}(x_k), \,\, s = \mathcal{T}(x_k), \, a = \scheduler_k(s).
    \]
\end{theorem}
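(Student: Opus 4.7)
The plan is to exhibit an adversary $P^\star \in \transfuncImdp$ for which the closed-loop behavior of $\system$ under the refined policy $\policy$ is equivalent, in distribution over abstract traces, to the behavior of $\imdp$ under $(\scheduler, P^\star)$. Since $P^\star$ is only one of the feasible adversaries, the right-hand side of \eqref{eq:existence_controller_main} lower-bounds the satisfaction probability in $\system$, giving the inequality.

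First, I would define $\policy$ explicitly: at step $k$ with continuous state $x_k$, set $s_k = \mathcal{T}(x_k)$, $a_k = \scheduler_k(s_k)$, and $\policy_k(x_k) = I_{s_k, a_k}(x_k)$. Measurability of $\policy_k$ follows because $\mathcal{T}$ is piecewise constant on the polytopic partition, $\scheduler_k$ is a map on the finite set $\States$, and each interface function $I_{s,a}$ is measurable on its domain $\mathcal{T}^{-1}(s)$. This $\policy$ satisfies the interface constraint in the theorem.

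Next, I would project the concrete trajectory $(x_0, x_1, \ldots)$ onto the abstract process $S_k \coloneqq \mathcal{T}(x_k)$ with action sequence $a_k = \scheduler_k(S_k)$. Conditioned on the history up to $x_k$, the one-step probability of moving into region $\mathcal{T}^{-1}(s')$ is $\transprob(x_k, \mathcal{T}^{-1}(s'), a_k)$, and by the definition of the bounds in \eqref{eq:transition_function} this quantity lies in $[\plow(s_k, a_k, s'), \pupp(s_k, a_k, s')]$. Hence the induced conditional distribution $\mu_k \in \distr{\States}$ is a valid element of $\transfuncImdp(s_k, a_k)$. Because $\imdp$ is interpreted under the dynamic uncertainty model, the family $\{\mu_k\}_{k \in \NN}$ (history-dependent through $x_k$) defines an admissible adversary $P^\star$, and the projected process $\{S_k\}$ has the same finite-dimensional distributions as the Markov chain induced by $(\scheduler, P^\star)$ on $\imdp$.

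Finally, label-preservation gives $g(x_k) = \labelfunc(S_k)$ almost surely for every $k$, so the system trace $(g(x_0), g(x_1), \ldots)$ equals the projected trace $(\labelfunc(S_0), \labelfunc(S_1), \ldots)$ pathwise. Consequently $\satprob_\pi^\system(\varphi) = \satprob^\imdp_{\scheduler, P^\star}\{(\labelfunc(s_0), \labelfunc(s_1), \ldots) \in \varphi\}$, and minimising over all $P \in \transfuncImdp$ only decreases this quantity. The main obstacle will be making the projection argument fully rigorous: one must verify that $x \mapsto \transprob(x, \mathcal{T}^{-1}(s'), a)$ is measurable so that $P^\star$ is an admissible history-dependent adversary, and that the cylinder sets generating $\varphi$ pull back measurably through the labelling. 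These technicalities follow from standard measure-theoretic arguments that generalise the reach-avoid case in \cite[Thm.~5.23]{ThesisThom2025}, and they apply uniformly to both finite- and infinite-horizon specifications.
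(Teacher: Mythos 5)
Your proposal follows essentially the same route as the paper's proof sketch: exhibit an adversary $\tilde P \in \transfuncImdp$ under which the refined policy induces the same distribution over labelled traces (using label preservation and the fact that each one-step kernel lands in the prescribed intervals by \cref{eq:transition_function}), so that \cref{eq:existence_controller_main} holds with equality for $\tilde P$, and then minimise over adversaries. The one step needing repair is your claim that the family $\{\mu_k\}$ ``history-dependent through $x_k$'' is itself an admissible adversary --- an IMDP adversary observes only the abstract history, not $x_k$, so one should instead take the conditional expectation of $\transprob(x_k,\mathcal{T}^{-1}(s'),a_k)$ given the abstract history, which still lies in $[\plow(s,a,s'),\pupp(s,a,s')]$ by convexity of the intervals; this averaging argument is precisely what the probabilistic alternating simulation relation cited by the paper formalises.
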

\smallskip

\textit{Proof sketch.}
The proof follows by generalising the result in~\cite[Thm.~5.23]{ThesisThom2025} to the specifications we consider in this paper.
Choosing the policy $\policy$ of the system $\system$ to be contained in the interface functions $I_{s,a}$ implies there exists\footnote{More specifically, this step relies on establishing a \emph{probabilistic alternating simulation relation} from the IMDP $\imdp$ to the dynamical system $\system$. Due to space limitations, we refer to \cite{ThesisThom2025} for details about these relations.} an adversary $\tilde{P} \in \transfuncImdp$ for which system $\system$ and the IMDP $\imdp$ have \emph{equal distributions over the traces} $(2^\mathcal{Y})^\mathbb{N}$.
This, in turn, implies there exists $\tilde{P} \in \transfuncImdp$ such that the inequality in \cref{eq:existence_controller_main} holds with equality.
Hence, minimising over $P \in \transfuncImdp$ yields the inequality in \cref{eq:existence_controller_main}, which concludes the proof.

\section{Data-Driven IMDP Construction}
\label{sec:data}

We use the dataset $\mathbf{D}_N$ to learn the enabled IMDP actions $\Actions(s)$, interface functions $I_{s,a}$, and distributions $\transfuncImdp(s, a)$ defined by \cref{eq:transition_function}.
Recall $\mathbf{D}_N = ( x_i, u_i, x_i^+ )_{i=1}^N$ only consists of noisy samples, while the dynamics $f$ and the underlying noise realisations are unknown.
For every state $s \in \States$, we take the following steps:
\begin{enumerate}
    \item Determine the enabled actions $\Actions(s)$ in each abstract state $s$ based on the dataset $\mathbf{D}_N$. 
    \item Use the samples $(x,u,x^+) \in \mathbf{D}_N$ and the Lipschitz constants $L_X$ and $L_U$ to overapproximate the forward reachable sets needed for step~(3).
    \item Use the Clopper-Pearson interval~\cite{clopper1934use} to define bounds on the IMDP's intervals using the reachable sets for every control input $\bar{u}_j \in \mathbf{\bar{u}}$ and next state $s' \in \States$.
    \item Define interface functions for every action $a \in \Actions(s)$ using the probability intervals computed in step~(3).
\end{enumerate}
Moreover, we use a refinement scheme that subdivides each partition element into smaller cells, called voxels. 
These voxels lead to smaller reachable sets and, thus, tighter~intervals.

Next, we detail each step of our algorithm.
We make any dependency on the interface function $I_{s,a}$ implicit for brevity.

\subsection{Enabled actions}
\label{subsec:actions}
We enable the abstract action $a_j$ in the abstract state $s_i$, i.e., $a_j \in \Actions(s_i)$, if the fraction of observed transitions from $x \in \mathcal{T}^{-1}(s_i)$ to $\hat x \in \mathcal{T}^{-1}(s_j)$ exceeds a threshold $T$, i.e., if
$$
\frac{\left| \{ (x, u, x^+) \in \mathbf{D}_N : x \in \mathcal{T}^{-1}(s_i),\; x^+ \in \mathcal{T}^{-1}(s_j) \} \right|}{\left| \{ (x, u, x^+) \in \mathbf{D}_N : x \in \mathcal{T}^{-1}(s_i) \} \right|} \geq T.
$$

The threshold $T$ is a hyperparameter used to adjust the set of enabled abstract actions $\Actions(s)$ for each $s \in \system$.

\subsection{Data-driven probability intervals}
\label{subsec:PAC}
The forward reachable set for a set $X \subset \mathbb{R}^n$, action $a$, and noise value $
\omega$ is defined as $F(X, a, \omega) = \big\{ f(x, u, \noise(\omega)) : x \in X, \, u = I_{\mathcal{T}(x),a}(x) \big\}$.
Thus, $F(\mathcal{T}^{-1}(s), a, \omega)$ is the set of reachable states from any state $x \in \mathcal{T}^{-1}(s)$, the control input $u = I_{\mathcal{T}(x),a}(x)$, and the noise realization $\noise(\omega)$.
Inspired by~\cite[Sect.~7.2]{ThesisThom2025}, observe that the following bounds hold:
\begin{align*}
    \plow(s,a,s') &\geq \mathbb{P}
    \left\{ 
    \omega \in \Omega : F(\mathcal{T}^{-1}(s), a, \omega) \subseteq \mathcal{T}^{-1}(s')
    \right\},
    \\
    \pupp(s,a,s') &\leq \mathbb{P}
    \left\{ 
    \omega \in \Omega : F(\mathcal{T}^{-1}(s), a, \omega) \cap \mathcal{T}^{-1}(s') \neq \emptyset
    \right\}.
\end{align*}

As the noise distribution is unknown, we define a sample-based version of these bounds.
Let $\omega_1,\ldots,\omega_M$ be $M \in \NN$ noise samples.
Similar to~\cite{l4dc}, we define PAC bounds on $\check{P}(s, a, s')$ and $\hat{P}(s, a, s')$ using the \emph{Clopper-Pearson interval}~\cite{clopper1934use} with $M$ samples of the noise $\omega_1,\ldots,\omega_M$.

\begin{theorem}[Clopper-Pearson interval]
    \label{thm:PAC_interval}
    Fix $s, s' \in \States$ and $a \in \Actions$, and let $I_{s,a}$ be the corresponding interface function.
    Let $(\omega_1,\ldots,\omega_M) \in \Omega^M$ be $M \in \NN$ i.i.d. noise samples.
    For brevity, let $F_i = F(\mathcal{T}^{-1}(s), a, \omega_i)$ and define
    \begin{align*}
        \check{M} &= \left| \left\{ i=1,\ldots,M : F_i \subseteq \mathcal{T}^{-1}(s') \right\} \right|,
        \\
        \hat{M} &= \left| \left\{ i=1,\ldots,M : F_i \cap \mathcal{T}^{-1}(s') \neq \emptyset \right\} \right|.
    \end{align*}
    Then, for any $\beta \in (0,1)$, it holds that
    \begin{align}
        \label{eq:thm_interval}
        \Prob^M \Big\{ 
        (\omega_{1},\ldots,\omega_{M}) \in \Omega^M
        :
        {}&{} \check{P}_\text{lb} \leq \plow(s,a,s') \,\, \wedge \\ {}&{} \pupp(s,a,s') \leq \hat{P}_\text{ub}
        \Big\} \geq 1-\beta,
        \nonumber
    \end{align}
    where $\check{P}_\text{lb} = 0$ if $\check{M} = 0$, and otherwise, $\check{P}_\text{lb}$ is the solution~to
    \begin{equation}
        \label{eq:thm_lower_bound}
        \frac{\beta}{2} = \sum\nolimits_{\ell=\check{M}}^{M} \binom{M}{\ell} \cdot (\check{P}_\text{lb})^{\ell} \cdot (1-\check{P}_\text{lb})^{M-\ell},
    \end{equation}
    and $\hat{P}_\text{ub} = 1$ if $\hat{M} = M$, and otherwise, $\hat{P}_\text{ub}$ is the solution~to
    \begin{equation}
        \label{eq:thm_upper_bound}
        \frac{\beta}{2} = \sum\nolimits_{\ell=0}^{\hat{M}} \binom{M}{\ell} \cdot (\hat{P}_\text{ub})^\ell \cdot (1-\hat{P}_\text{ub})^{M-\ell}.
    \end{equation}
\end{theorem}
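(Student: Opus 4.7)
The plan is to reduce the statement to the classical one-sided Clopper--Pearson confidence bound for a Binomial proportion, applied separately to the containment and intersection events and then combined by a union bound.

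First, I would set $p_{\mathrm{lo}} := \Prob\{\omega \in \Omega : F(\mathcal{T}^{-1}(s), a, \omega) \subseteq \mathcal{T}^{-1}(s')\}$ and $p_{\mathrm{up}} := \Prob\{\omega \in \Omega : F(\mathcal{T}^{-1}(s), a, \omega) \cap \mathcal{T}^{-1}(s') \neq \emptyset\}$. The inequalities stated immediately before the theorem give $p_{\mathrm{lo}} \leq \plow(s,a,s')$ and $\pupp(s,a,s') \leq p_{\mathrm{up}}$, so it suffices to show $\check{P}_{\mathrm{lb}} \leq p_{\mathrm{lo}}$ and $p_{\mathrm{up}} \leq \hat{P}_{\mathrm{ub}}$ with the stated joint confidence. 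Since the $M$ noise samples are i.i.d., the counts $\check{M}$ and $\hat{M}$ are Binomial random variables with parameters $(M, p_{\mathrm{lo}})$ and $(M, p_{\mathrm{up}})$ respectively, because they tally independent Bernoulli trials of the two defining events (measurability of the events in $\omega$ is inherited from the measurability of $f$).

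Next, I would handle each one-sided bound via the standard Clopper--Pearson argument. For the lower bound, observe that the right-hand side of \cref{eq:thm_lower_bound} equals $\Prob[\mathrm{Bin}(M, \check{P}_{\mathrm{lb}}) \geq \check{M}]$ and is strictly increasing in $\check{P}_{\mathrm{lb}}$ on $(0,1)$, so the root defining $\check{P}_{\mathrm{lb}}$ is unique. By this monotonicity, the event $\{\check{P}_{\mathrm{lb}} > p_{\mathrm{lo}}\}$ coincides with $\{\Prob[\mathrm{Bin}(M, p_{\mathrm{lo}}) \geq \check{M}] < \beta/2\}$; letting $k^\star$ be the smallest integer for which this tail probability strictly drops below $\beta/2$, the event becomes $\{\check{M} \geq k^\star\}$, whose probability under $\check{M} \sim \mathrm{Bin}(M,p_{\mathrm{lo}})$ is at most $\beta/2$ by the definition of $k^\star$ (and the degenerate case $\check{M}=0$ is handled by the convention $\check{P}_{\mathrm{lb}} = 0$). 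A symmetric argument applies to $\hat{P}_{\mathrm{ub}}$: the right-hand side of \cref{eq:thm_upper_bound} equals $\Prob[\mathrm{Bin}(M, \hat{P}_{\mathrm{ub}}) \leq \hat{M}]$ and is strictly decreasing in $\hat{P}_{\mathrm{ub}}$, yielding $\Prob\{\hat{P}_{\mathrm{ub}} < p_{\mathrm{up}}\} \leq \beta/2$.

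Finally, a union bound over the two failure events gives \cref{eq:thm_interval} with confidence at least $1 - \beta$. I do not expect a substantive obstacle here: the argument is a textbook derivation of the Clopper--Pearson interval, and the only paper-specific step is the reduction via the set-based inclusion/intersection bounds from \cref{sec:data}. The most delicate point is verifying the strict monotonicity of the Binomial tail in its parameter (so that the roots $\check{P}_{\mathrm{lb}}$ and $\hat{P}_{\mathrm{ub}}$ are well defined), which follows from differentiating the underlying polynomial in $p$ and is classical.
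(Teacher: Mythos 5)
Your proposal is correct and follows essentially the same route as the paper's own proof: identify $\check{M}$ and $\hat{M}$ as binomial counts, apply the one-sided Clopper--Pearson bounds at level $\nicefrac{\beta}{2}$ each, and combine via the union bound. If anything, your version is slightly more careful than the paper's, since you explicitly distinguish the event probabilities $p_{\mathrm{lo}}, p_{\mathrm{up}}$ from $\plow(s,a,s')$ and $\pupp(s,a,s')$ and invoke the sandwich inequalities stated before the theorem, whereas the paper's proof identifies the binomial success probabilities with $\plow$ and $\pupp$ directly.
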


\begin{proof}
    Observe that $\check{M}$ and $\hat{M}$ are samples from binomial distributions\footnote{We write $\Bin(n,p)$ to denote a binomial distribution with $n \in \NN$ experiments and success probability $p \in [0,1]$.} with success probabilities $\check{p}(s,a,s')$ and $\hat{p}(s,a,s')$, i.e., 
    $
    \check{M} \sim \Bin(M,\check{p}(s,a,s')) 
    \,\text{and}\,
    \hat{M} \sim \Bin(M,\hat{p}(s,a,s')).
    $
    Applying the Clopper-Pearson interval~\cite{clopper1934use} to bound the success probabilities yields
    $
        \Prob^N \big\{ \check{P}_\text{lb} \leq \check{p}(s,a,s') \big\} \geq 1-\frac{\beta}{2}$, and 
        $\Prob^N \big\{ \hat{p}(s,a,s') \leq \hat{P}_\text{ub} \big\} \geq 1-\frac{\beta}{2}.
    $
    Combined through the union bound, we obtain \cref{eq:thm_interval}.
\end{proof}

\subsection{Data-driven reachable sets}
\label{subsec:FRS}
In practice, we do not have the noise samples and reachable sets $F(\cdot)$ needed in \cref{thm:PAC_interval}.
Still, we can overapproximate $F(\cdot)$ using the samples in the dataset $\mathbf{D}_N = ( x_i, u_i, x_i^+ )_{i=1}^N$.
With abuse of notation, let $\mathbf{D}_N(X, U) \subset \mathbf{D}_N$ be the subset of samples $(x, u, x^+)$ for which $x \in X$ and $u \in U$. Thus, $\mathbf{D}_N(\mathcal{T}^{-1}(s), \bar{U}_j)$ are the samples for which $x$ is contained in the region $\mathcal{T}^{-1}(s)$ associated with IMDP state $s$, and $u$ in the discrete control set $\bar{U}_j$ with representative point $\bar{u}_j$.
For a fixed sample $(x,u,x^+) \in \mathbf{D}_N(\mathcal{T}^{-1}(s), \bar{U}_j)$, define\footnote{Recall from \cref{assumption} that Lipschitz constants $L_X$, $L_U$ are known.}
\begin{align*}
\widehat{F}(x, u, x^+) \coloneqq \big\{ 
    \tilde{x} \in \mathcal{X} : {}&{} \|\tilde{x} - x^+\| \leq L_U \cdot \|u - \bar{u}_j\| \\
    &\!\!\!{} + L_X \cdot \max_{\hat x \in \mathcal{T}^{-1}(s)} \|x - \hat x\| 
\big\},
\end{align*}
where, implicitly, the next state $x^+$ is defined by an (unknown) noise realization $\noise(\omega)$.
By definition of the Lipschitz constants in \cref{eq:lipschitz}, $\widehat{F}(x, u, x^+)$ overapproximates the forward reachable set $F(\mathcal{T}^{-1}(s), a, \omega)$ under the noise realization $\noise(\omega)$ and the interface function defined by $I_{s,a} = \bar{u}_j$, i.e.,
\begin{equation}
    \label{eq:overapproxiate_FRS}
    F(\mathcal{T}^{-1}(s),a,\omega) \subseteq \widehat{F}(x, u, x^+),
\end{equation}
Therefore, we directly obtain the following key result.
\begin{corollary}
    \label{eq:cor:PAC_intervals}
    Fix $j \in \{1,2,\ldots,J\}$.
    Consider again \cref{thm:PAC_interval}, but with $M \coloneqq | \mathbf{D}_N(\mathcal{T}^{-1}(s), \bar{U}_j) |$ and with $F_i \coloneqq \widehat{F}(x_i, u_i, x^+_i)$ for all $i=1,2,\ldots,M$.
    Then, for any $\beta \in (0,1)$, the PAC bounds in \cref{eq:thm_interval} hold.
\end{corollary}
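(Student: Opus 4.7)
\textit{Proof sketch.} The plan is to reduce the claim directly to Theorem~\ref{thm:PAC_interval} by checking that the surrogate counts $\check{M}$ and $\hat{M}$ built from $\widehat{F}_i$ are still binomially distributed with success probabilities that lie on the correct side of $\plow(s,a,s')$ and $\pupp(s,a,s')$. First, I would fix $s,s'\in\States$ and $a\in\Actions(s)$ corresponding to $\bar u_j$, and condition on the realised size $M = |\mathbf{D}_N(\mathcal{T}^{-1}(s),\bar U_j)|$. Because the original $N$ samples are i.i.d., the selected triples $(x_i,u_i,x_i^+)$ remain i.i.d.\ within the sub-dataset, and in particular the underlying noise realisations that generate $x_i^+$ are i.i.d.\ copies from the noise distribution. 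Consequently, the events $\{\widehat{F}_i \subseteq \mathcal{T}^{-1}(s')\}$ and $\{\widehat{F}_i \cap \mathcal{T}^{-1}(s') \neq \emptyset\}$ are each i.i.d.\ Bernoulli across $i$, so that $\check{M}\sim\Bin(M,q_\text{lo})$ and $\hat{M}\sim\Bin(M,q_\text{up})$ for some $q_\text{lo},q_\text{up}\in[0,1]$.

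Next, I would use the overapproximation~\eqref{eq:overapproxiate_FRS}, i.e., $F(\mathcal{T}^{-1}(s),a,\omega_i) \subseteq \widehat{F}_i$, to sandwich the target probabilities. Taking probability over $\omega$ (and marginalising over the joint distribution of $(x_i,u_i)$ conditional on the filter) yields
\[
q_\text{lo} \,\leq\, \Prob\{\omega : F(\mathcal{T}^{-1}(s),a,\omega) \subseteq \mathcal{T}^{-1}(s')\} \,\leq\, \plow(s,a,s'),
\]
\[
\pupp(s,a,s') \,\leq\, \Prob\{\omega : F(\mathcal{T}^{-1}(s),a,\omega) \cap \mathcal{T}^{-1}(s') \neq \emptyset\} \,\leq\, q_\text{up},
\]
where the outer inequalities are exactly the pre-stated bounds preceding Theorem~\ref{thm:PAC_interval}. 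I would then replay the Clopper-Pearson argument from the proof of that theorem on $\Bin(M,q_\text{lo})$ and $\Bin(M,q_\text{up})$ with confidence parameter $\beta/2$ each: this produces $\check{P}_\text{lb}$ and $\hat{P}_\text{ub}$ satisfying $\Prob^M\{\check{P}_\text{lb}\leq q_\text{lo}\}\geq 1-\beta/2$ and $\Prob^M\{q_\text{up}\leq \hat{P}_\text{ub}\}\geq 1-\beta/2$. Chaining these with the sandwich inequalities above and applying the union bound delivers~\eqref{eq:thm_interval}.

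The main obstacle, such as it is, is the asymmetric use of the overapproximation: the inclusion $F\subseteq\widehat{F}$ can only \emph{shrink} the probability of the ``fully contained'' event and only \emph{grow} the probability of the ``nonempty intersection'' event. Matching this to the fact that Clopper-Pearson produces a one-sided \emph{lower} PAC bound from the first count and a one-sided \emph{upper} PAC bound from the second is precisely what makes the two-sided interval land on the correct side of $[\plow(s,a,s'),\pupp(s,a,s')]$ despite $F$ never being directly observed. Everything else in the argument is a transcription of Theorem~\ref{thm:PAC_interval} with $M$ now set to the data-driven count $|\mathbf{D}_N(\mathcal{T}^{-1}(s),\bar U_j)|$.
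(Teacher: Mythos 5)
Your argument is correct, and it reaches the same conclusion via a slightly different route than the paper. The paper's proof keeps \cref{thm:PAC_interval} applied to the \emph{true} (unobservable) reachable sets $F(\cdot)$ and then transfers the conclusion to the observed sets $\widehat{F}(\cdot)$ by a pathwise monotonicity argument: by \cref{eq:overapproxiate_FRS}, the count $\check{M}$ computed with $\widehat{F}$ is no larger than the one computed with $F$ (and $\hat{M}$ no smaller), and the Clopper--Pearson endpoints in \cref{eq:thm_lower_bound,eq:thm_upper_bound} are monotone in these counts, so the data-driven bounds are only more conservative. You instead treat the surrogate events $\{\widehat{F}_i \subseteq \mathcal{T}^{-1}(s')\}$ and $\{\widehat{F}_i \cap \mathcal{T}^{-1}(s') \neq \emptyset\}$ as their own i.i.d.\ Bernoulli process with success probabilities $q_\text{lo}, q_\text{up}$, show via \cref{eq:overapproxiate_FRS} and the bounds preceding \cref{thm:PAC_interval} that $q_\text{lo} \leq \plow(s,a,s')$ and $q_\text{up} \geq \pupp(s,a,s')$, and rerun Clopper--Pearson on $\Bin(M,q_\text{lo})$ and $\Bin(M,q_\text{up})$ directly. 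Your version is somewhat more self-contained (it needs neither the coupling of the two counts on the same sample path nor the monotonicity of the Clopper--Pearson endpoints, only the implication between the surrogate and true events), and you also make explicit a point the paper glosses over, namely that one must condition on the realised size $M$ of the filtered sub-dataset and verify that the underlying noise realisations in that subset remain i.i.d.\ because the filter depends only on $(x_i,u_i)$. The paper's proof is shorter but leaves these two facts implicit; both arguments are sound.
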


\begin{proof}
    Due to \cref{eq:overapproxiate_FRS}, the value of $\check{M}$ for $\widehat{F}(\cdot)$ is lower than for $F(\cdot)$, leading to a lower value of $\check{P}_\text{lb}$.
    Conversely, the value of $\hat{M}$ for $\widehat{F}(\cdot)$ is higher than for $F(\cdot)$, leading to a higher $\check{P}_\text{ub}$.
    Thus, the claim follows.
\end{proof}

\subsection{Interface functions}
For each IMDP state $s_i$, we first determine the enabled actions $\Actions(s_i)$ using the dataset $\mathbf{D}_N$, as described in~\cref{subsec:actions}. 
Then, as explained in~\cref{subsec:FRS}, we compute overapproximations $\widehat{F}(x, u, x^+)$ of the reachable sets for the samples in $\mathbf{D}_N(\mathcal{T}^{-1}(s_i), \bar{U}_j)$ for all $1\leq j \leq J$. For each interface function $I_{s_i,a} = \bar{u}_j$, we then compute probability intervals $[\check{P}_\text{lb}(s_i,a,s'), \hat{P}_\text{lb}(s_i,a,s')]$ for all successor states $s' \in S$.

Each action $a_k \in \Actions(s_i)$ guides the system towards a particular state $s_k$. We define the interface $I_{s_i,a_k}$ to pick the control $\bar{u} \in \mathbf{\bar{u}}$ maximising the probability of reaching $s_k$:
\begin{align*}
    I_{s_i,a_k} = \argmax_{\bar{u} \in \mathbf{\bar{u}}} 
    \big[{}&{}\max(\check{P}_\text{lb}(s_i,a^{(\bar{u})},s_k), \quad 1-
    \\
    &\sum_{s'\in S\setminus\{s_k\}}\hat{P}_\text{lb}(s_i,a^{(\bar{u})},s')), \ \ I_{s_i,a^{(\bar{u})}} = \bar{u}\big].
\end{align*}

\subsection{Abstraction refinement}
\label{subsec:voxels}

To make the abstraction less conservative, we divide each region $R_i$ into $\varphi_i$ smaller, non-overlapping subregions $\Phi_i=\{\phi_1, \phi_2, \ldots, \phi_{\varphi_i}\}$ called voxels, such that $R_i = \bigcup_{\ell=1}^{\varphi_i} \phi_\ell$. Intuitively, these voxels allow for policies that combine different inputs depending on the continuous state within each region $R_i$. The approach outlined so far can be seen as a special case with a single voxel per IMDP state.

For each voxel $\phi_\ell \subseteq s_i$, we introduce a virtual state $s_{i}^{\phi_\ell}$ such that $\mathcal{T}^{-1}(s_{i}^{\phi_\ell}) = \phi_\ell$.
These states $s_{i}^{\phi_\ell}$ are named \emph{virtual} because the abstraction does not explicitly represent each voxel as an IMDP state. %
Following steps (2-4) for each virtual states $s_i^{\phi_\ell}$, we compute transition probabilities $[\check{P}_\text{lb}(s_i^{\phi_\ell},a_k,s'), \hat{P}_\text{ub}(s_i^{\phi_\ell},a_k,s')]$ and interface functions $I_{s_i^{\phi_\ell}, a_k}$ for every $s_i,s' \in S, \phi_\ell \in \Phi_i$, and $a_k \in \Actions(s_i)$.
These intervals need to be combined to construct the IMDP with abstract state $s$. 
To preserve soundness, we conservatively aggregate these local intervals by taking their union and assigning the resulting interval as the transition probability from $s_i$ to $s'$ under action $a_k$ in the abstraction. Formally,
\begin{equation}
    \check{P}_\text{lb}(s_i,a_k,s') = \min_{1 \leq \ell \leq \varphi_i} \check{P}_\text{lb}(s_i^{\phi_\ell},a_k,s'),
\end{equation}
\begin{equation}
    \hat{P}_\text{ub}(s_i,a_k,s') = \max_{1 \leq \ell \leq \varphi_i} \hat{P}_\text{ub}(s_i^{\phi_\ell},a_k,s').
\end{equation}
We aggregate the local interface functions corresponding to each voxel $\phi_\ell$ to define a global one for state $s_i$:
\begin{equation}
    I_{s_i, a_k}(x) = \cup_{\ell=1}^{\varphi_i} I_{s_i^{\phi_\ell}, a_k}(x)\cdot \mathbf{1}_{\{x \in \phi_\ell\}},
\end{equation}
where $\mathbf{1}_{\{\cdot\}}$ is the indicator function. Since voxels are disjoint, exactly one indicator function is active for each $x \in \mathcal{T}^{-1}(s_i)$.

In general, there is a trade-off between computation time (increasing with the number of voxels) and the accuracy of the abstraction (which improves with the number of voxels). We investigate this trade-off empirically in the next section.

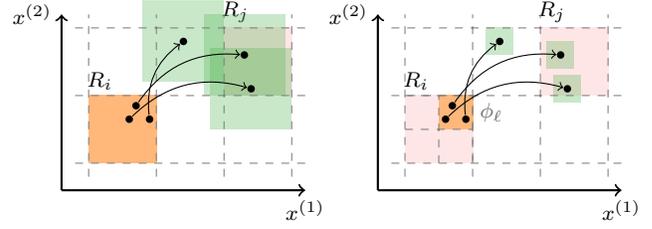
\begin{figure}[t!]
\centering
\begin{tikzpicture}[scale=0.9, font=\footnotesize]
    
    \fill[red!10] (1,1) rectangle (2,2) node [pos=0.5, yshift=0.4cm, xshift=-0.3cm, above, black] {$R_i$};
    \fill[orange!80, opacity=0.6] (1.0,1.0) rectangle (2.0,2.0);
    \fill[red!10] (3,2) rectangle (4,3) node [pos=0.5, yshift=0.4cm, xshift=-0.3cm, above, black] {$R_j$};
    
    \fill[green!40!gray, opacity=0.3] (2.7,2.0) rectangle (3.9,3.2);
    \fill[green!40!gray, opacity=0.3] (2.8,1.5) rectangle (4.0,2.7);
    \fill[green!40!gray, opacity=0.3] (1.8,2.2 ) rectangle (3.0,3.4);
    
    \draw[step=1cm,gray,dashed] (0.8,0.8) grid (4.2,3.2);
            
    \node (x1) [circle, fill=black, inner sep=1pt, minimum size=1pt]at (1.70,1.85) {};
    \node (x1plus) [circle, fill=black, inner sep=1pt, minimum size=1pt] at (3.3,2.6) {};
    \draw[->, thin, bend left] (x1) to (x1plus);
    
    \node (x2) [circle, fill=black, inner sep=1pt, minimum size=1pt]at (1.60,1.65) {};
    \node (x2plus) [circle, fill=black, inner sep=1pt, minimum size=1pt] at (3.4,2.1) {};
    \draw[->, thin, bend left] (x2) to (x2plus);

    \node (x3) [circle, fill=black, inner sep=1pt, minimum size=1pt]at (1.90,1.65) {};
    \node (x3plus) [circle, fill=black, inner sep=1pt, minimum size=1pt] at (2.4,2.8) {};
    \draw[->, thin, bend left] (x3) to (x3plus);

    \draw[thick,->] (0.6,0.6) -- (4.2,0.6) node[below] {$x^{(1)}$};
    \draw[thick,->] (0.6,0.6) -- (0.6,3.2) node[left] {$x^{(2)}$};
\end{tikzpicture}\hspace{-0.3cm}
\begin{tikzpicture}[scale=0.9, font=\footnotesize]
    \fill[red!10] (1,1) rectangle (2,2) node [pos=0.5, yshift=0.4cm, xshift=-0.3cm, above, black] {$R_i$};
    \fill[orange!80, opacity=0.6] (1.5,1.5) rectangle (2.0,2.0) node [pos=0.5, yshift=0cm, xshift=0.2cm, right, black] {$\phi_\ell$};
    \fill[red!10] (3,2) rectangle (4,3) node [pos=0.5, yshift=0.4cm, xshift=-0.3cm, above, black] {$R_j$};
    
    \fill[green!40!gray, opacity=0.3] (3.1,2.4) rectangle (3.5,2.8);
    \fill[green!40!gray, opacity=0.3] (3.2,1.9) rectangle (3.6,2.3);
    \fill[green!40!gray, opacity=0.3] (2.2,2.6) rectangle (2.6,3);
    
    \draw[step=1cm,gray,dashed] (0.8,0.8) grid (4.2,3.2);    \draw[step=0.5cm,gray,dashed] (1,1) grid (2,2);

    \node (x1) [circle, fill=black, inner sep=1pt, minimum size=1pt]at (1.70,1.85) {};
    \node (x1plus) [circle, fill=black, inner sep=1pt, minimum size=1pt] at (3.3,2.6) {};
    \draw[->, thin, bend left] (x1) to (x1plus);
    
    \node (x2) [circle, fill=black, inner sep=1pt, minimum size=1pt]at (1.60,1.65) {};
    \node (x2plus) [circle, fill=black, inner sep=1pt, minimum size=1pt] at (3.4,2.1) {};
    \draw[->, thin, bend left] (x2) to (x2plus);

    \node (x3) [circle, fill=black, inner sep=1pt, minimum size=1pt]at (1.90,1.65) {};
    \node (x3plus) [circle, fill=black, inner sep=1pt, minimum size=1pt] at (2.4,2.8) {};
    \draw[->, thin, bend left] (x3) to (x3plus);

    \draw[thick,->] (0.6,0.6) -- (4.2,0.6) node[below] {$x^{(1)}$};
    \draw[thick,->] (0.6,0.6) -- (0.6,3.2) node[left] {$x^{(2)}$};
\end{tikzpicture}
\vspace{-3em}%
\caption{
Three samples $(x, u, x^+)$ with $x \in R_i$, and overapproximations of forward reachable sets $\widehat{F}(x, u, x^+ \mid s_i, \bar{u}_j)$ around each $x^+$, under the discrete control input $\bar{u}_j \in \bar U_j$.
In the left figure, one voxel per abstract state is used, resulting in large green boxes that form coarse overapproximations. In the right figure, four voxels (two per dimension) are used, leading to smaller and tighter overapproximations of the forward reachable sets. This allows for less conservative abstractions, as demonstrated in our experimental results.
}
\label{fig:samples}
\end{figure}

\section{Experimental Evaluation}
\label{sec:experiments}
We implemented our algorithm in Python and evaluated it on a variant of the car parking benchmark from~\cite{l4dc}. 
To compute~robust schedulers (as defined in \cref{eq:optimal_policy}), we use robust value iteration within the PRISM model checker~\cite{DBLP:conf/cav/KwiatkowskaNP11}. 
The experiments are run on a machine with two 3.3~GHz~cores and 128~GB of RAM. 
The benchmark from~\cite{l4dc} models the $(x,y)$ position of a car, 
where the inputs $u_k = [v_k, \theta_k]$, i.e., the velocity and steering angle, are constrained to $u_k \in \mathcal{U} = [-0.2, 0.2] \times [-\pi, \pi]$.
The dynamics evolve as
\begin{equation}
\begin{split}
    x_{k+1} &= x_k + 10 (v_k + \zeta_1)  \cos(\theta_k + \zeta_2) + \zeta_3,
    \\
    y_{k+1} &= y_k + 10 (v_k + \zeta_1) \sin(\theta_k + \zeta_2) + \zeta_4,
    \label{eq:experiment_dynamics}
\end{split}
\end{equation}
where the noise $\zeta \sim N(0, \sigma^2 \mathbf{I}_4)$ is normally distributed with $\sigma^2=0.01$. 
The multiplicative noise makes this benchmark more challenging to solve than only with additive noise.
We consider an unbounded reach-while-avoid specification with the goal region $\mathcal{X}_G = [10, 11] \times [10, 11]$ and unsafe region $\mathcal{X}_U = ( \RR^2 \setminus ([-10, 10] \times [-10, 10]) ) \cup ([3, 4] \times [0, 9]) \cup ([7, 9] \times [3, 12])$, creating the maze-like environment in~\cref{fig:traj}. %

\noindent\textbf{Dataset.}
We use $N = 2 \times 10^8$ noisy state-input-state samples $(x,u,x^+)$, where $x$ is sampled uniformly from $\mathcal{X}$, $u$ is sampled uniformly from a discretised set of $11 \times 11$ control inputs in the control space $\mathcal{U}$, and $x^+$ follows from \cref{eq:experiment_dynamics}.
Note that the noise realisations are not recorded in the dataset.

\noindent\textbf{Abstraction.} 
We use a uniform partition into $12 \times 12$ states, using $9$ voxels per state. The threshold $T=0.05$ is used to determine enabled actions.
The IMDP abstraction has $144$ states, $827$ actions, and $8\,199$ transitions. 
Generating the abstraction takes approximately $72$ minutes, and computing an optimal policy using PRISM less than one second. We use a confidence of $1 - \beta = 1 - \frac{0.05}{10\,000}$ to compute each probability interval as per \cref{thm:PAC_interval}. 
As the total number of transitions is below $10\,000$, the full abstraction is correct with a probability $\geq 0.95$.

\noindent\textbf{Results.}
In~\cref{fig:heatmaps} we plot the reach-while-avoid satisfaction probabilities obtained using $(2 \times 2)$ and $(3 \times 3)$ voxel grids per abstract state (as described in \cref{subsec:voxels}). 
We also tried using $(1\times 1)$ voxels, i.e., when no interface function refinement is used; however, that leads to vacuous satisfaction probabilities, indicating that the abstraction is too conservative.
We also plot 100 simulations of the closed-loop system over the $(3 \times 3)$ voxel abstraction in~\cref{fig:traj}, showing that our algorithm yields safe and robust control policies in practice.

\begin{figure}[t!]
\centering
\begin{subfigure}{0.48\linewidth}
    \centering
    \includegraphics[height=3.5cm]{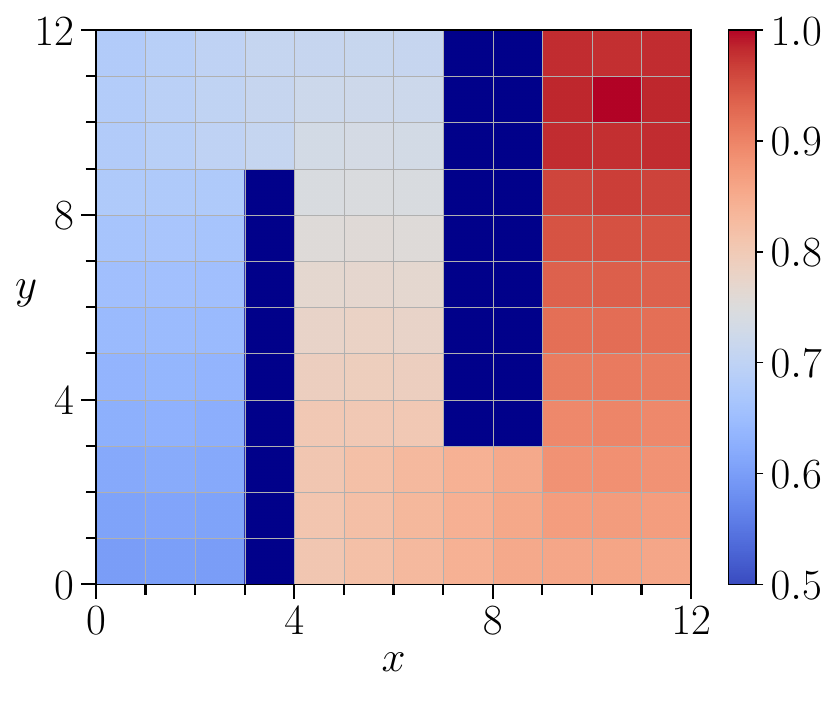}%
    \vspace{-0.6em}
    \caption{$3 \times 3$ voxels per state.}
\end{subfigure}%
\begin{subfigure}{0.48\linewidth}
    \centering
    \includegraphics[height=3.5cm]{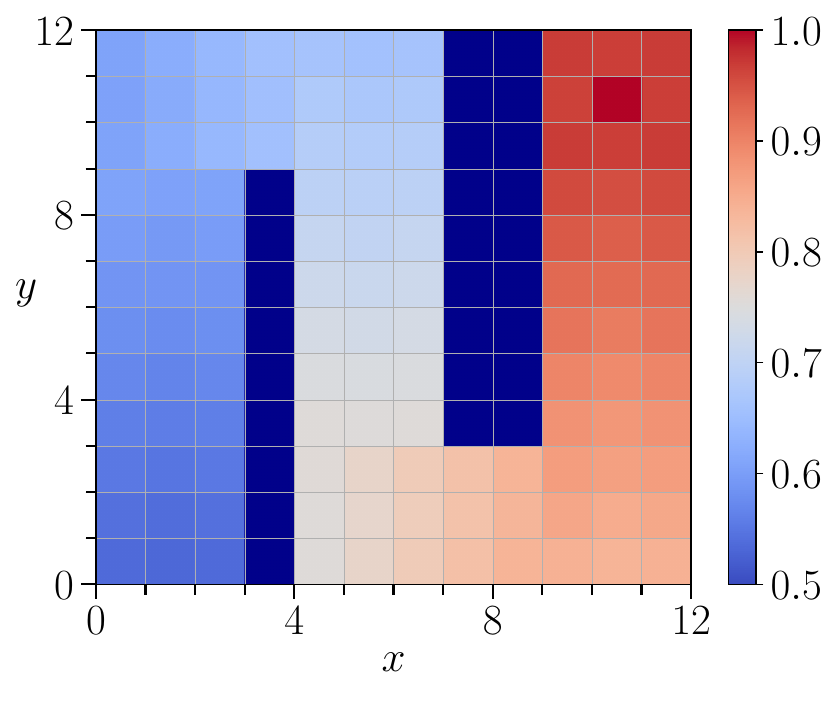}%
    \vspace{-0.6em}
    \caption{$2 \times 2$ voxels per state.}
\end{subfigure}
\caption{Satisfaction probability in different abstract initial states for (a) $3 \times 3$ and (b) $2 \times 2$ voxels per abstract state. 
}
\label{fig:heatmaps}
\vspace{0.3em}
\centering
\includegraphics[height=3.5cm]{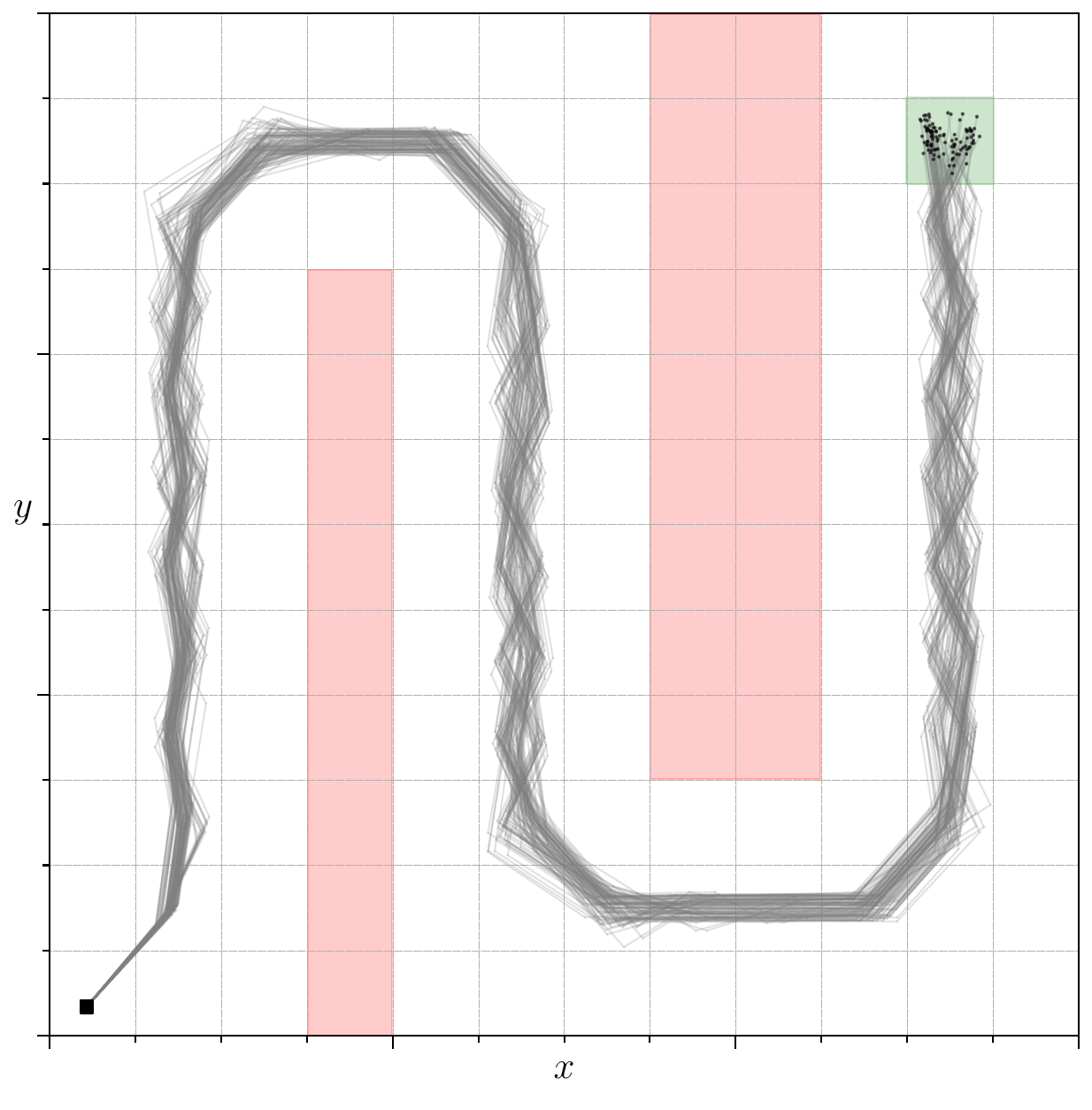}
\vspace{-0.6em}%
\caption{Simulation of 100 trajectories of the closed-loop system using the refined controller.}
\vspace{-1.75em}%
\label{fig:traj}
\end{figure}

\section{Conclusion}
\label{sec:conclusion}

We presented a data-driven algorithm for constructing abstractions of discrete-time stochastic systems with unknown dynamics.
As a distinctive and novel feature, our algorithm only requires noisy state-input-state observations.
Together with (an upper bound on) the Lipschitz constant of the dynamics, we used these noisy observations to learn an IMDP abstraction with PAC guarantees.
Our algorithm enables controller synthesis with guarantees, even when only a black-box simulation model of the system is available.

In this work, we laid the theoretical foundations for a data-driven abstraction framework. A direct implementation of the algorithm is computationally expensive as our numerical results show. Future work should focus on reducing computational complexity through parallelisation, compositional techniques, and learning-based methods.

\bibliographystyle{ieeetr}
\bibliography{references.bib}

\end{document}